\let\csname equation*\endcsname\relax
\let\csname endequation*\endcsname\relax
\newcommand{\be}{\begin{equation}}
\newcommand{\ee}{\end{equation}}
\newcommand{\ad}{a^\dagger}
\newcommand{\hH}{\hat{H}}
\newcommand{\Ft}{\mathbb{F}_2}
\newcommand{\Ct}{\mathbb{C}_2}
\newcommand{\mC}{\mathcal{C}}
\newtheorem{thrm}{Theorem}
\begin{document}
\title{Revealing symmetries in quantum computing for many-body systems}

\author{Robert van Leeuwen}

\address{University of Jyv\"askyl\"a, Department of Physics, Nanoscience Center, Jyv\"askyl\"a, Finland}
\vspace{10pt}


\begin{abstract}
We develop a method to deduce the symmetry properties of many-body Hamiltonians when they are prepared in Jordan-Wigner form 
for evaluation on quantum computers.  Symmetries, such as point-group symmetries in molecules, are apparent in the standard second quantized form of the Hamiltonian.
They are, however,
masked when the Hamiltonian is translated into a Pauli matrix representation required for its operation on qubits.
To reveal these symmetries we prove a general theorem that provides a straightforward method to calculate the transformation of Pauli tensor strings under symmetry operations.
They are a subgroup of the Clifford group transformations and induce a corresponding group representation inside the symplectic matrices.
We finally give a simplified derivation of an affine qubit encoding scheme which allows for the removal of qubits due to Boolean symmetries and thus reduces computational effort in quantum computing applications.
\end{abstract}

\section{Introduction}

Recent years have seen steady progress in the design of quantum computing devices with an increasing number of qubits \cite{Moll2018}. These developments are especially interesting for applications
in many-body physics \cite{Fauseweh2024} and quantum chemistry \cite{Bauer2020} as quantum computers are intrinsically advantageous to solve quantum many-body problems. Moreover instead of the exponential computational
cost required to perform such calculations on classical computers, on quantum computers the cost is envisioned to grow only polynomially. This has spurred many efforts in the quantum chemistry \cite{Bauer2020} and
condensed matter physics \cite{Fauseweh2024} community to develop schemes to solve interesting many-body problems on a quantum computer. \\
For the simulation of such quantum systems on a quantum computer we need to represent the many-body Hamiltonian as a linear operator on a system of qubits. This is usually done by first writing the 
Hamiltonian in a second quantized form, i.e. as an operator in Fock space, and then use the Jordan-Wigner transformation \cite{JordanWigner1928,Whitfield2011,Steudtner2018} to translate the creation and annihilation operators into appropriate tensor products of Pauli matrices that can act
on systems of qubits.  The anti-symmetrized kets or Slater determinants in Fock space are then mapped to tensor products of qubits, while the anti-symmetry properties of the Fock space states are now encoded into the Pauli product form of the Hamiltonian. Although this representation is exceedingly useful for practical calculations on quantum computers the Jordan-Wigner transformed representation of the Hamiltonian generally conceals the symmetry properties
of the original second quantised form.  One of the main results of this work is the derivation of a practical way to recover these symmetries directly from the Jordan-Wigner form of the Hamiltonian. \\
The preservation of symmetries in quantum computing is often essential in attaining the correct computational output. As was pointed out in \cite{Ryabinkin2018} there is often no convenient basis in the space of
multiple qubits that is also an eigenbasis of a relevant symmetry operator and when the symmetry is not imposed there is a risk of mixing in the wrong symmetry states leading to a meaningless result. For this and other reasons the related question of enforcing the right symmetries in a calculation has been studied in several works \cite{Ryabinkin2018,Bravyi2017,Yen2019,Setia2020,Cao2022,Picozzi2023}.
 Apart from being of intrinsic scientific interest, the elucidation of symmetries also has important practical applications in the reduction of the number of qubits needed for quantum computations and therefore several techniques have been developed to this aim. A recent scheme was developed by Picozzi and Tennyson \cite{Picozzi2023} which uses a clever affine encoding technique to address the issue.
In this work we give a considerably shorter proof of their main result and illustrate all our results with the elucidating example of the Hubbard dimer, a system that was studied before in the context of qubit reduction using projection 
and shift operators \cite{Moll2016}. We finally like to point out that recently an iterative scheme based on conserved charges has been implemented for qubit reduction in quantum computations \cite{Gunderman2024}. However, unlike the work in \cite{Picozzi2023}, it does not allow for an easy identification of the symmetry labels that characterize the many-body systems at hand.  \\
The paper is structured as follows:
we first give a discussion of the permutational symmetries of Hamiltonians defined in a local atomic basis, such as Hubbard lattices, and review their symmetry properties. Next we illustrate, by means of an example,
the masking of these symmetries after Jordan-Wigner transformation and subsequently prove a general theorem to resolve this problem.  We then address the issue of qubit reduction by symmetries as was discussed in  \cite{Picozzi2023}
in which we make a small correction to their recent proof and provide a much shorter proof of the main result. We finally present an application and our conclusions.

\section{Revealing symmetries in Jordan-Wigner transformed Hamiltonians}

\subsection{Permutation symmetries in many-body Hamiltonians}

In quantum chemical and condensed matter applications one commonly studies second quantized molecular Hamiltonians of the general form
\be
\hH = \sum_{ij}^M h_{ij} \ad_i a_j + \sum_{ijkl}^M v_{ijkl} \, \ad_i \ad_j a_k a_l
\label{H_gen}
\ee
where $a_i^\dagger$ and $a_i$ are creation and annihilation operators for electrons in one-particle state $i$ \cite{Helgaker2000,Stefanucci2013}.
Let us to be concrete assume that the Hamiltonian represents a molecule in a localized basis, such as L\"owdin orthogonalized
atomic orbitals \cite{Lowdin1950,Helgaker2000}, and that the coefficients $h_{ij}$ and $v_{ijkl}$ represent one- and two-electron integrals over such orbitals.
If the molecule has a non-trivial point group, then every
point group operation will permute a given local orbital from one atomic site to another. 
Since the Hamiltonian of the molecule is invariant under point group symmetries it then follows that for such a permutation we have the following identities:
\be
h_{P(ij)} = h_{ij} \quad \quad v_{P(ijkl)} = v_{ijkl}
\label{P_sym}
\ee
where $P$ is a permutation of the labels of the localized orbitals and
and we used the notation $P(i_1, \ldots, i_M) =P(i_1) \ldots P (i_M)$. 
Equivalently we can write that the Hamiltonian of Eq.(\ref{H_gen}) has the symmetry
\be
\hH = \sum_{ij}^M h_{ij} \ad_{P(i)} a_{P(j)} + \sum_{ijkl}^M v_{ijkl} \, \ad_{P (i)} \ad_{P(j)} a_{P(k)} a_{P(l)}
\label{H_perm}
\ee
for any permutation $P$ of the localized orbitals compatible with the point group symmetry. In this work we study the general class of Hamiltonians that have symmetries of
the kind displayed in Eq.(\ref{H_perm}), whether they are derived from a molecular model or not, as may be the case for certain model Hamiltonians in condensed matter physics.\\
By making various simplifying assumptions about the one-electron matrix elements $h_{ij}$ and the
two-electron integrals $v_{ijkl}$ in Eq.(\ref{H_gen}) one can derive various classes of popular model Hamiltonians that are widely studied in chemistry and 
physics. In chemistry this encompasses, for example, the Parr-Pariser-Pople Hamiltonians \cite{Linderberg1968,Jug1990,Stefanucci2013} which are used in the study of complex unsaturated molecules
while in condensed matter physics an important class consists of the Hubbard Hamiltonians \cite{HubbardReview2022} which are widely studied as they exhibit a metal-insulator transition and
may offer insights in the currently still marginally understood phenomenon of high-temperature superconductivity. 
The general form of the Hubbard Hamiltonian is 
\be
\hH = -t \sum_{\langle i,j \rangle,\sigma} (a_{i\sigma}^\dagger  a_{j\sigma} + a_{j\sigma}^\dagger  a_{i\sigma} )+ U \sum_i \ad_{i \uparrow} a_{i \uparrow}
\ad_{i \downarrow} a_{i \downarrow}
\ee
where $<i,j >$ is a notation for nearest neighbour pairs in a lattice of a certain symmetry, and $\sigma \in\{ \uparrow, \downarrow\}$ denotes the electron spin. The Hubbard systems are studied as standard models for strongly correlated electrons  \cite{HubbardReview2022} and consequently it
has been a long standing goal to solve these models for sufficiently large number of lattice sites to exhibit the behaviour of strongly correlated extended systems.
It is therefore not surprising that in the quantum computing community there has been a large interest in these systems \cite{Reiner2019,Cade2020,Suchsland2022,Stanisic2022} since
a quantum computer may be a novel way to reach that long standing goal and indeed there have been important recent advances \cite{Suchsland2022,Stanisic2022} in that direction.\\
To solve a Hubbard cluster, or generally a Hamiltonian of the form of Eq.(\ref{H_gen}), on a quantum computer one needs to translate the terms in the Hamiltonian from creation and annihilation operators to tensor products of Pauli matrices as is routinely done
with the Jordan-Wigner transformation \cite{JordanWigner1928,Whitfield2011,Steudtner2018}. We can then ask what happens to the permutation symmetries of Eq.(\ref{H_perm})
for the new representation of the Hamiltonian; in other words how do the various Pauli operators transform among one another under the permutation symmetries?
Answering this question is one of the main topics of this work, in which we will derive a general law for the transformation of Pauli operators under symmetries. However, before addressing the general case it is helpful to
clarify the issue using an example, which we will present in the subsequent section.

\subsection{An illustrative example}
\label{sec_example}

To elucidate the general problem we start by giving an example that is as simple as possible but still sufficiently interesting to illustrate the key 
aspects and to which we can refer back later when discussing the general case.
The example is that of spin-less fermions on a triangular cluster \cite{Penz2021} of three atoms described by the Hamiltonian
\be
\hH = a_1^\dagger a_2 + a_2^\dagger a_1 +  a_1^\dagger a_3 + a_3^\dagger a_1  + a_2^\dagger a_3 + a_3^\dagger a_2 
\ee 
where $a_i^\dagger$ and $a_i$ are the creation and annihilation operators for a particle on site $i$.
For this system any permutation $(1,2,3) \rightarrow (P(1), P(2)), P(3))$ of the labels of the creation and annihilation operators 
leaves the Hamiltonian invariant, which constitutes the symmetry group $S_3$, or $C_{3v}$ in point group language.
Let us study this Hamiltonian in the $2$-particle subspace of Fock space \cite{Stefanucci2013}
spanned by the basis vectors
\be
 | 12 \rangle = a_2^\dagger a_1^\dagger |0 \rangle \quad \quad | 13 \rangle = a_3^\dagger a_1^\dagger |0\rangle \quad \quad  | 23 \rangle = a_3^\dagger a_2^\dagger |0 \rangle
 \label{two_basis}
\ee 
and consider the cyclic permutation $P(123)=(231)$. This permutation of labels induces a permutation $U(P) |ij \rangle = |P(ij) \rangle$ on the 2-particle states given by 
\be
U(P) | 12 \rangle= | 2 3 \rangle \quad \quad U(P) | 13 \rangle = |2 1\rangle = - | 12 \rangle \quad \quad U(P) | 23 \rangle = |31 \rangle = - | 13 \rangle
\label{basis_perm}
\ee
where the minus signs in the equations arise from the anti-commutation properties of the creation operators in Eq.(\ref{two_basis}).
The matrix representations $H$ of the Hamiltonian $\hH$ and of $U(P)$ of Eq.(\ref{basis_perm}) in the basis $\{|12 \rangle, |13 \rangle, |23 \rangle\}$ are 
readily calculated to be
\be
H =
\left(
\begin{array}{ccc}
0 & 1 & -1 \\
1 & 0 & 1 \\
-1 & 1 & 0
\end{array}
\right)
\quad \quad 
U (P) 
=
\left(
\begin{array}{ccc}
0 & -1 & 0 \\
0 & 0 & -1 \\
1 & 0 & 0
\end{array}
\right)
\label{UP}
\ee
where $U(P)$ is a unitary matrix which we can check to be commuting with $H$, i.e.
\be
[H , U (P) ] = H U (P)   - U (P) H  = 0 \quad  \Rightarrow \quad  H = U (P)  H U^\dagger (P) 
\ee
and therefore $H$ is invariant under a unitary conjugation induced by $P$, which precisely is what defines a symmetry operation for $H$.
Similar considerations can be done in the one-particle subspace of dimension 3 and the zero-and three particle subspaces of dimension 1.
This gives commuting blocked matrices $H= H_0 \oplus H_1 \oplus H_2 \oplus H_3$ and $U(P)= U(P)_0 \oplus U(P)_1 \oplus U(P)_2 \oplus U(P)_3$ (where the subindex 
of the blocks refers to the particle number) acting in the full $1+3+3+1=8$-dimensional Fock space. In general symmetries are represented by signed permutation matrices $U(P)_n$ that
commute with the Hamiltonian in each sector of fixed particle number $n$.
\\
Let us now translate the problem to qubit space where we directly consider the 8-dimensional space spanned by the 3-qubit tensor products $| \nu \rangle=| \nu_1 , \nu_2 , \nu_3 \rangle
=| \nu_1 \rangle \otimes | \nu_2 \rangle \otimes | \nu_3 \rangle$ with $\nu_i \in \{ 0,1\}$. Let now $X_i,Y_i,Z_i$ be the standard Pauli matrices acting on qubit $i$ and further denote
$\sigma_i^\pm =(X_i \pm i Y_i)/2$. Then the creation operators in Jordan-Wigner form  \cite{JordanWigner1928,Whitfield2011,Steudtner2018} are given by 
\begin{eqnarray}
A_1^\dagger &=&  \sigma_1^- \otimes Z_2 \otimes Z_3 \nonumber \\
A_2^\dagger &=&  1_1 \otimes \sigma_2^- \otimes Z_3 \nonumber \\
A_3^\dagger &=&  1_1 \otimes 1_2 \otimes \sigma^-_3 \nonumber 
\end{eqnarray}
and the corresponding annihilation operators follow from the replacement $\sigma^- \rightarrow \sigma^+$ in these expressions. From this we can then
construct the corresponding form of the Hamiltonian and obtain the expression
\be
\hH= \frac{1}{2} [  X_1 X_2 ( 1-Z_1 Z_2 ) + X_1 Z_2 X_3 (1- Z_1 Z_3)  + X_2 X_3 (1-Z_2 Z_3) ]
\label{ham_JW}
\ee
where we used $Y_j = i X_j Z_j$ to write the Hamiltonian solely in terms of Pauli $X$ and $Z$ matrices.
This Hamiltonian does not have an obvious symmetry under relabelling of the indices. To find the symmetries we first need to construct the equivalent 
of the transformation $U (P)$ but acting on tensor products $| \nu \rangle$. We denote the corresponding operator by $\mC_P$ and discuss its general construction in detail
in the next section. If we use $\mC_P$ (vide infra) we find that the $X_i$ and $Z_i$ operators transform according to
\be
\begin{split}
\mC_P X_1 \mC_P^\dagger &=  X_2 Z_1    \\
\mC_P X_2 \mC_P^\dagger &=  X_3 Z_1  \\
\mC_P X_3 \mC_P^\dagger &= X_1 Z_2 Z_3    
\end{split}
\quad \quad \quad \quad
\begin{split}
\mC_P Z_1 \mC_P^\dagger &=  Z_2   \\
\mC_P Z_2 \mC_P^\dagger &= Z_3   \\
\mC_P Z_3 \mC_P^\dagger &= Z_1  
\end{split} 
\label{XZ_trans} 
\ee
We thus see that the $Z_i$ operators transform cyclically according to the permutation $P$ but that the $X_i$ operators instead transform
to a product of $X$ and $Z$ operators.
We can then check that under these transformations we indeed find an invariance of the Hamiltonian, i.e.
\be
\hH = \mC_P \hH \mC_P^\dagger
\ee
and therefore the transformations in Eq.(\ref{XZ_trans}) reveal a symmetry that was hidden in the explicit form of the Hamiltonian of Eq.(\ref{ham_JW}).
The actual structure behind the transformations in Eq.(\ref{XZ_trans}) remains mysterious at this point, but some insight is attained by writing them in the form of
a so-called Clifford matrix or tableau \cite{Aaronson2004,VanDenBerg2021,Mastel2023} which we will define now. If 
\be
\mC_P X_1^{r_1} \ldots X_M^{r_M} Z_1^{s_1} \ldots Z_M^{s_M} \mC_P^\dagger =   X_1^{r_1^\prime} \ldots X_M^{r_M^\prime} Z_1^{s_1^\prime} \ldots Z_M^{s_M^\prime} 
\label{mC_trans}
\ee
where $r,s$ and $r^\prime,s^\prime$ are $M$-dimensional vectors with entries $0$ and $1$ then the matrix $\mathcal{M}_P$ that maps vector $(r,s)$ to vector $(r^\prime,s^\prime)$ is called
the Clifford matrix or tableau \cite{Aaronson2004,VanDenBerg2021,Mastel2023}; in general it is defined in a similar way for more general unitary transformations than $\mC_P$ that we use here, but the definition above suffices at this point. In our example $\mathcal{M}_P$ has the explicit form
\be
\mathcal{M}_P =
\left(
\begin{array}{ccc|ccc}
0 & 0 & 1 & & & \\
1 & 0 & 0 & & & \\
0 & 1 & 0 & & & \\
\hline
1 & 1 & 0 & 0 & 0 & 1 \\
0 & 0 & 1 & 1 & 0 & 0 \\
0 & 0 & 1 & 0 & 1 & 0
\end{array}
\right)
= 
\left(
\begin{array}{c|c}
\Pi_P & 0 \\
\hline
Q_P & \Pi_P
\end{array}
\right)
\label{tableau_example}
\ee
where the empty block is just filled with zero entries.
The upper left and lower right blocks are readily identified as the permutation matrix $\Pi_P$ for our symmetry,
i.e. $\Pi_{P,ij}$ has an entry equal to $1$ when $i=P(j)$ and has zero entries otherwise.
It remains to explain the structure of the lower left block $Q_P$ of the matrix $\mathcal{M}_P$; we will show that it is 
given by
\be
 Q_P =  L \Pi_P + \Pi_P L
 \label{QP_eqn}
 \ee
 where $L$ is the triangular matrix
 \be
 L =
 \left(
 \begin{array}{ccc}
 0 & 0 & 0 \\
 1 & 0 & 0 \\
 1 & 1 & 0 
 \end{array}
 \right)
 \ee
 in which all entries below the diagonal are filled with ones and the remaining entries with zeroes. 
Eq.(\ref{QP_eqn}) is a special case of a general theorem valid for arbitrary $M$-qubit Hamiltonians. The derivation of that result is the content of the next sections.

\subsection{Mapping symmetries from Fock space to multi-qubit space}

The mapping from Fock space states to occupation number or multi qubit states was already studied by Jordan and Wigner \cite{JordanWigner1928} and has been
found to be great practical use in quantum computing applications. Here, however, we focus on an aspect of this mapping that has not received much attention.
Since symmetries correspond to permutations of the single particle states, they lead to permutations of anti-symmetrized kets of Slater determinants
in Fock space where the permutations typically introduce sign factors.
On the other hand, multi-qubit space consists of tensor products of qubits that have no particular symmetry
under permutations. This means that in the translation from a Fock space to a multi-qubit representation, 
those sign factors have to be introduced explicitly and in the following we will describe how to do this.
As it is easy to loose track of factors and meanings of the quantum states we have tried to make the notation and formulation as clear and precise as possible.
The final result of this section is a precise general definition of the operator $\mC_P$ that we used in Eq.(\ref{XZ_trans}) for our motivating example.
\\
Let $J=(j_1, \ldots, j_N)$ be a multi-index and consider the many-particle states \cite{Stefanucci2013}
\be
| J \rangle  = | j_1, \ldots, j_N \rangle
\ee
representing a fermionic state with one particle in state $j_1$, another in $j_2$, etc. Such states are anti-symmetric, i.e. for any permutation $P \in S_N$ of $N$ labels
we have
\be
| P(J) \rangle = | j_{P(1)}, \ldots, j_{P (N)} \rangle = (-1)^{|P|} | j_1, \ldots, j_N \rangle  
\label{perm_N}
\ee
where $|P|$ is the parity of the permutation, and therefore to choose 
a linearly independent set of many-body kets it is useful to define kets $| J \rangle$ with an ordering, like $j_1 < j_2 < \ldots < j_N$.
There are $M \choose N$ such ordered kets which form an orthonormal basis.
We then define the creation operators $\ad_i$ to create such many-body kets. If we start with the empty state $| 0 \rangle$ we have \cite{Stefanucci2013}
\begin{eqnarray}
|i_1 \rangle &=& \ad_{i_1} | 0 \rangle  \nonumber \\
| i_1, i_2 \rangle &=& \ad_{i_2}  |i_1 \rangle  \nonumber \\
|i_1, \ldots , i_N \rangle &=& \ad_{i_N} | i_1, \ldots, i_{N-1} \rangle = \ad_{i_N} \ldots \ad_{i_1} | 0\rangle
\label{create}
\end{eqnarray}
The action of the corresponding annihilation operators $a_i$ follows from the definition of the adjoint of an operator and is found to be \cite{Stefanucci2013}
\be
 a_{i} | j_1, \ldots, j_N \rangle=  \sum_{l=1}^N (-1)^{N+l}  \delta_{i, j_l} | j_1 \ldots j_{l-1}, j_{l+1}, \ldots j_N \rangle
 \label{ai_action}
\ee
i.e. removing $j_N$ has a plus sign, removing $j_{N-1}$ a minus etc. and this continues in an alternating way. The
Hilbert space 
set of anti-symmetric $N$-particle kets on $M$ is often denoted
$\mathcal{H}_N= \Lambda^N \mathcal{H}_1$ \cite{Penz2021}, i.e. the $N$-fold wedge product of one-dimensional tensors. The full
Fock space is
\be
\mathcal{F} = \mathcal{H}_0 \oplus  \mathcal{H}_1 \oplus \ldots \oplus  \mathcal{H}_M \quad  \quad \quad \dim \mathcal{F} = \sum_{k=0}^M { M \choose k} = 2^M
\ee
The space relevant for quantum computing consists of all linear combinations of the $M$-fold tensor products of qubits, defined more precisely as
\be
\Ct^M = \{ \textrm{span} ( |\nu_1 \rangle \otimes \ldots \otimes | \nu_M \rangle ) | \nu_j \in \{ 0,1 \} \} \quad \quad \dim \Ct^M = 2^M
\ee
As opposed to the anti-symmetric states $| J \rangle$ in Fock space the states in $\Ct^M$ have no particular permutation symmetry and therefore this feature
needs to be build in the operators acting on them.
The basis states in $\mathcal{F}$ and in $\Ct^M$ can related in a one-to-one manner by using an occupation number representation of the one-particle states representing 
the anti-symmetric kets in $\mathcal{F}$.
With $| J \rangle, j_1 < \ldots < j_N$
we associate a vector $\nu= ( \nu_1, \ldots , \nu_{M} )$ for which $\nu_{j_1}=\nu_{j_2}= \ldots =\nu_{j_N}=1$ while all other elements of $\nu$ are zero. 
We then define the one-to-one mapping $\mathcal{J}: \mathcal{F} \rightarrow \Ct^M$ on basis states by
\be
\mathcal{J} (| j_1, \ldots,  j_N \rangle  ) = | \nu_1 \rangle \otimes | \nu_2 \rangle \otimes \ldots \otimes | \nu_M \rangle = | \nu_1 \ldots \nu_M \rangle
\ee
and use linear extension to define the mapping on all of $\mathcal{F}$, such that superpositions of anti-symmetric kets are mapped to superpositions of
multi-qubit states.
For example if $M=3, N=2$ then
\be
\mathcal{J} ( |12 \rangle  + |13 \rangle) = |1 \rangle \otimes | 1 \rangle \otimes | 0 \rangle 
+  |1 \rangle \otimes | 0 \rangle \otimes | 1 \rangle = |110 \rangle + |101 \rangle
\ee
The translation of the creation operators $a_k^\dagger : \mathcal{F} \rightarrow \mathcal{F}$ to operators $A_k^\dagger : \Ct^M \rightarrow \Ct^M$,
and similarly for the annihilation operators,
is furnished by the Jordan-Wigner transformation in which
the creation and annihilation operators are represented as $2^M \times 2^M$ matrices acting on $\Ct^M$ vectors \cite{JordanWigner1928,Whitfield2011,Steudtner2018}:
\begin{eqnarray}
A_k &= 1_1 \otimes \ldots \otimes 1_{k-1} \otimes \sigma_k^+ \otimes Z_{k+1} \otimes \ldots \otimes Z_M 
\label{JW_a}\\
A_k^\dagger &= 1_1 \otimes \ldots \otimes 1_{k-1} \otimes \sigma_k^- \otimes Z_{k+1} \otimes \ldots \otimes Z_M 
\label{JW_ad}
\end{eqnarray}
which allows for the translation of any second quantized Hamiltonian as in (\ref{H_gen}) to a qubit Hamiltonian of the form
\be
\hH = \sum_{ij}^M h_{ij} A_i^\dagger A_j + \sum_{ijkl}^M v_{ijkl} \, A_i^\dagger A_j^\dagger A_k A_l
\label{H_qubit}
\ee
which, when written out in terms of Pauli matrix tensor products, represents
a $2^M \times 2^M$-matrix acting on multi-qubit states
(note that our definition corresponds to the ordering of the labels as in Eq.(\ref{create}), reversing the order gives another prescription).\\
Our discussion so far aimed to set the stage for the question how symmetry operations are mapped from Fock space to multi-qubit space.
In  Eq.(\ref{perm_N}) we considered permutations that only reordered the contents of the ket. Now we extend this to a permutation
$P \in S_M$ of all $M > N$ labels and again we denote
\be
| P (J) \rangle = | j_{P(1)} \ldots j_{P(N)} \rangle 
\ee
which is of the same form except that now there are in general labels in $P(J)$ that do not occur in $J$.
For example for $M=3$, $N=2$ if $|J \rangle = |13 \rangle$ and $P(123)=(231)$ then $| P(13) \rangle = |21\rangle =- |12\rangle$.
As this example shows, the labels $| P(J) \rangle$ may not be in ascending order, but we can reorder them using a permutation of $N$ elements. The actual form
of this permutation is not relevant, but its sign is important. Let the labels $P (J)$ be reordered in ascending order to $\bar{J}$, i.e. $\bar{j}_k \in P(J)$ with
$\bar{j}_1 < \ldots < \bar{j}_N$. Then
 \be
 | P(J ) \rangle = (-1)^{\pi_P (J)} | \bar{J} \rangle
 \label{P_perm}
 \ee
 where $\pi_P (J)$ is the parity (even or odd) of the permutation of $N$ elements that reorders $P(J)$ to $\bar{J}$. 
 Its value can be calculated as follows: we first construct the $M \times M$ permutation matrix $\Pi_{P,ij}$ with entries
 equal to one if $i=P(j)$ and zero entries otherwise. Then we consider the $N \times N$-submatrix of $\Pi_P$, which we denote by
 $(\bar{J}| J)$,  having rows $\bar{J}$ and columns $J$. Then $\pi_P (J)$ is equal
 to the number or row swaps needed to transform $(\bar{J} | J)$ to the $N \times N$ identity matrix. Equivalently we have
 \be
 (-1)^{\pi_P (j)} = \det (\bar{J}| J)
 \ee
i.e. the determinant of the submatrix $(\bar{J} | J)$.
For example, for the cyclic permutation $P(123)=(231)$ we have
\be
\Pi_P=
\left(
\begin{array}{ccc}
0 & 0 & 1 \\
1 & 0 & 0 \\
0 & 1 & 0 
\end{array}
\right)
\quad \quad \quad 
(\overline{13}|13) = (12|13) =
\begin{pmatrix}
0 & 1 \\
1 & 0
\end{pmatrix} 
\ee
Clearly we need only one row swap to convert $(12|13)$ to the identity matrix and its determinant is equal to $-1$
which agrees with the sign in our example where $| P(13) \rangle =- |12\rangle$.\\
With this preparation we are ready to generalize the derivation of our example in the previous section.
Our starting point is the relation for  $N$-electron multi-indices $K$ and $L$
\be
\langle K | \hH | L \rangle = \langle P(K) | \hH | P (L) \rangle 
\ee
where $P$ is a permutation of the $M$ labels of the creation and annihilation operators in the second quantized form of $\hH$ that keeps $\hH$ invariant, as in Eq.(\ref{H_perm}).
The relation above is a simple consequence of the fact that, written out in terms of creation and annihilation operators as in Eq.(\ref{create}), the left and
the right hand side only differ in a renaming of all operators which can not affect the value of the expectation value.
We thus have
\begin{eqnarray}
\langle K | \hH | L \rangle &=& \langle P(K) | \hH | P (L) \rangle 
 = \sum_{I,J} \langle P(K) | I \rangle \langle I | \hH | J \rangle \langle J | P(L) \rangle \nonumber \\
 &=& \sum_{I,J} U_{KI} (P) \langle I | \hH | J \rangle  U_{JL}^\dagger (P)
 \label{H_invariant}
\end{eqnarray}
where we sum over ordered multi-indices $I,J$ and we defined the unitary matrix
\be
U_{KI} (P)= \langle P(K) | I \rangle = (-1)^{\pi_P (K)} \langle \bar{K} | I \rangle =  (-1)^{\pi_P (K)} \delta_{ \bar{K}  I } 
\label{u_def}
\ee 
where $\delta_{ \bar{K}  I }$ is a multi-index Kronecker delta.
The mapping $P \rightarrow U (P)$ is an (in general reducible) unitary representation of the symmetry group of the Hamiltonian
and we have from Eq.(\ref{H_invariant}) that
\be
\hH = U (P) \hH U^\dagger (P) \quad \quad \Rightarrow \quad \quad [ \hH , U (P)] = 0
\label{H_inv}
\ee
for all symmetries described by permutations $P$. If we evaluate Eq.(\ref{u_def}) for the example in the previous section we exactly
recover the matrix $U(P)$ of Eq.(\ref{UP}).\\
We now again consider the effect of a symmetry permutation in multi qubit basis. Our goal is to derive an operator $\mC_P$
that leads to similar equation to (\ref{H_inv}) for the Hamiltonian in Jordan-Wigner form acting on multi-qubit states, i.e. we search for
an operator $\mC_P$ defined on $\Ct^M$ such that
\be
\hH = \mC_P \hH \mC_P^\dagger \quad \quad \Rightarrow \quad \quad [ \hH , \mC_P ] = 0
\ee
for every permutation $P$ corresponding to a symmetry of the Hamiltonian.

The starting point of our definition of this operator is Eq.(\ref{P_perm}).
 If a permutation transforms anti-symmetric kets $|J \rangle$ to
 $| P (J) \rangle$, then in ket $| J \rangle$ the states $j_1, \ldots, j_N$ are occupied, i.e. $\nu_{j_1}= \ldots = \nu_{j_N}=1$ and all other occupation numbers zero, while 
 in $| \bar{J} \rangle$ all states $P(j_1), \ldots, P(j_N)$ are occupied, i.e. $\nu_{P(j_1)}= \ldots = \nu_{P(j_N)}=1$ and all other occupation numbers zero.
 The mapping $| J \rangle \rightarrow | P (J) \rangle$ can therefore be represented in occupation number language as a mapping
 \be
 \mC_P |\nu \rangle = (-1)^{\tau_P (\nu) } | \Pi_P \nu \rangle
 \ee
where $\Pi_P$ is a permutation matrix with $\Ft$ entries defined by $\Pi_{P,ij}=1$ if $i=P(j)$ and zero otherwise
(this is a slight abuse of notation as we defined $\Pi_P$ before but with integer entries rather than integers mod 2, but this does generally not lead to confusion as in practice
the matrix forms are identical). 
It remains to specify the value of the parity $\tau_P (\nu)$ for a given $\nu$ and $P$. This must correspond to the number of row swaps in $(P(J)|J)$ given by $\pi_P (J)$
in the equivalent Fock space expression (\ref{P_perm}), which 
we now must translate to an occupation number expression that we can calculate given $\nu$ and $P$.
If positions $J$ in ascending order in vector $\nu$ have occupation $1$, then we define $[\nu]=J$. For example, if $| \nu \rangle =|101\rangle$ then $[101]=13$.
We then define $\tau_P (\nu)$ to be the number of row swaps in submatrix $([\Pi_P \nu] | [\nu])$ of $\Pi_P$ required to transform it to the identity
matrix, or equivalently
\be
(-1)^{\tau_P (\nu)} = \det ([\Pi_P \nu] | [\nu])
\label{tau_def}
\ee
 where $[\nu]=J$ is a set of labels for the occupied states in $\nu$ and $[\Pi_P \nu]=P(J)$ is a set of labels for the occupied states in the image vector $\Pi_P \nu$.
So finally, with definition (\ref{tau_def}), we can then represent $\mC_P$ and its adjoint $\mC_P^\dagger$ as operators on $\Ct^M$ as
\begin{eqnarray}
\mC_P &=& \sum_{\nu \in \Ft^M} (-1)^{\tau_P (\nu)}| \Pi_P \nu  \rangle \langle \nu|  \label{CP_def}\\
\mC_P^\dagger &=& \sum_{\nu \in \Ft^M}  (-1)^{\tau_P (\nu)} |  \nu \rangle \langle \Pi_P \nu  | 
\end{eqnarray}
These equations are the main result of this section and an overview of the general structure of our mappings is summarized in Fig.\ref{mapping}.
We thus see again a general feature of the mapping from $\mathcal{F}$ to $\Ct^M$; the anti-symmetry properties
are not incorporated in the basis states $| \nu \rangle$ but appear as explicit signs $(-1)^{\tau_P  (\nu)}$ in the transformation operators.
We will derive how these operators transform strings of $X$'s and $Z$'s in the next section.
\begin{figure}
\centering
\includegraphics[width=0.75\textwidth]{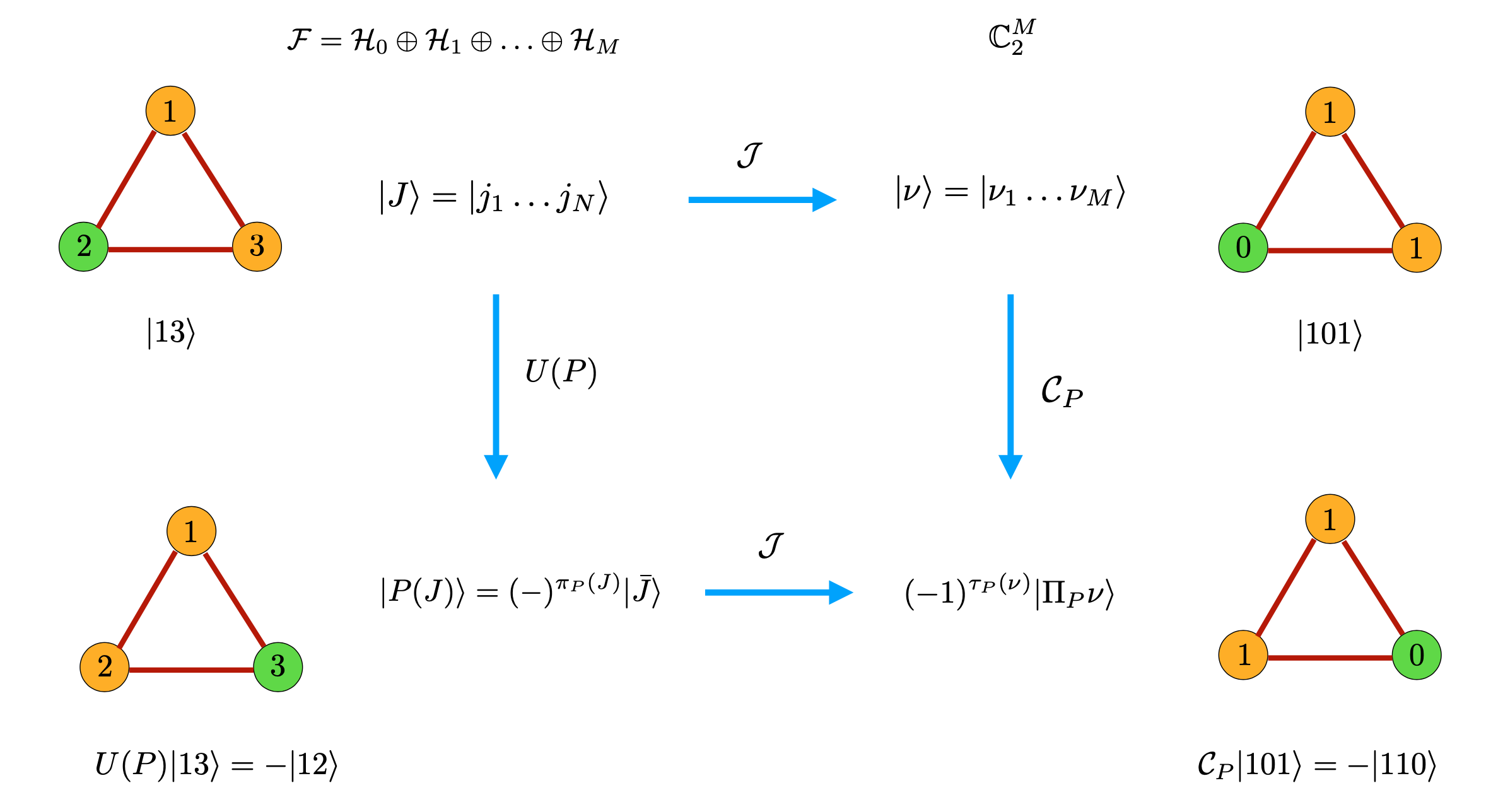} 
\caption{Summary of the mappings from Fock space to multi qubit space. On the left hand side we display the Fock space quantities and on the right hand side  multi-qubit space.
The states are mapped by the transformation $\mathcal{J}$ and the action of $U(P)$ in Fock space is represented by $\mC_P$ in multi-qubit space.}
\label{mapping}
\end{figure}

\subsection{A general theorem}

With the proper definition of $\mC_P$ at hand we are now ready to prove the main result of this work, which is a general theorem on the
transformation of the Pauli tensor products under the permutation symmetries of the second quantized Hamiltonian, which yields a simple practical
method to find their transformation properties.
Take an arbitrary tensor string of Pauli $X$ and $Z$ matrices of the form
\be
\sigma^{q}= (XZ)^q=X^r Z^s = X_1^{r_1} \ldots X_M^{r_M} Z_1^{s_1} \ldots Z_M^{s_M} 
\ee
where $q=(r,s) \in \Ft^{2M}$
. Since
the actions of $X_i$ and $Z_i$ on $| \nu \rangle$ are given by
\be
X_i | \nu \rangle = |\nu_1\ldots  \nu_{i-1}  \nu_i \oplus 1  \nu_i \ldots \nu_M\rangle \quad \quad \quad Z_i | \nu \rangle =(-1)^{\nu_i} | \nu \rangle
\ee
it follows that we can represent the action $\sigma^q$ on a multi qubit state $| \nu \rangle$ in the compact form
\be
\sigma^{q}  | \nu \rangle =X^r Z^s | \nu \rangle = (-1)^{s \cdot \nu} |\nu \oplus r \rangle
\ee
where
\be
s \cdot \nu = \sum_{i=1}^M s_i \nu_i
\ee
denotes the standard Euclidean inner product.
Therefore $\sigma^q$ can be represented in operator form as
\be
\sigma^q = \sum_{\nu \in \Ft^M} (-1)^{s \cdot \nu} |\nu \oplus r \rangle \langle \nu |
\label{sigma_q}
\ee
We can now employ the definition (\ref{CP_def}) of the operator $\mC_P$ to derive the transformation of Eq.(\ref{mC_trans}). Using the
representation (\ref{sigma_q}) and (\ref{CP_def}) we calculate that
\begin{align}
&\mC_P \sigma^q \mC_P^\dagger  = \sum_{x,y,\nu \in \Ft^M}(-1)^{s \cdot \nu + \tau_P (x) + \tau_P (y)}  | \Pi_P x  \rangle \langle x|
\nu \oplus r \rangle \langle \nu |  y \rangle \langle \Pi_P y |  \nonumber \\
 &= \sum_{x,y \in \Ft^M}(-1)^{s \cdot y +  \tau_P (x) + \tau_P (y)}  | \Pi_P x  \rangle \langle x| y \oplus r \rangle  \langle \Pi_P y  |  \nonumber \\
& = \sum_{y \in \Ft^M}(-1)^{s \cdot y +  \tau_P (y \oplus r) + \tau_P (y)}  | \Pi_P y \oplus \Pi_P r \rangle   \langle \Pi_P y  |  \nonumber \\
& = \sum_{\omega \in \Ft^M}(-1)^{s \cdot \Pi_P^{-1} \omega + \tau_P (\Pi_P^{-1} \omega \oplus r)  + \tau_P (\Pi_P^{-1} \omega)} |\omega \oplus  \Pi_P r  \rangle \langle \omega | \nonumber \\
&= X^{\Pi_P r} Z^{(\Pi_P^{-1})^T s}  \sum_{\omega \in \Ft^M} 
(-1)^{ \tau_P (\Pi_P^{-1} \omega \oplus r)  + \tau_P (\Pi_P^{-1} \omega)} | \omega   \rangle \langle \omega |  
= X^{\Pi_P r} Z^{\Pi_P s}  B_{P,r} 
\label{sigma_c_trans_perm}
\end{align}
where we used that $\Pi_P= (\Pi_P^{-1})^T$ since a permutation matrix is orthogonal, and we defined
\be
B_{P,r} = \sum_{\omega \in \Ft^M} 
(-1)^{ \tau_P (\Pi_P^{-1} \omega \oplus r)  + \tau_P (\Pi_P^{-1} \omega)} |\omega  \rangle \langle \omega | 
\label{BPr}
\ee
It remains to investigate this quantity in more detail. It is first of all clear that $B_{P,r}=1$ for $r=0$, and therefore from Eq.(\ref{sigma_c_trans_perm}) it follows that a single $Z_i$ gets mapped to $Z_{P(i)}$
as we have seen in our example, so we only need to figure out the mappings of the $X_i$ operators. 
This is established by the following theorem:
\begin{thrm}
The operator $B_{P,r}$ of Eq.(\ref{BPr}) is given by
\be
B_{P,r} = Z^{Q_P r}
\ee
where $Q_P$ is the $M \times M$-matrix given by
\be
Q_P =  L \Pi_P + \Pi_P L
\label{QP_th}
\ee
where $L$ is the $M \times M$ matrix with every entry filled with ones below the diagonal and zero entries otherwise.
\end{thrm}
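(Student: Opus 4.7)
The plan is to exploit the fact that both $B_{P,r}$ and $Z^{Q_Pr}$ are already diagonal in the computational basis, which reduces the theorem to a combinatorial identity for the parity $\tau_P$.

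First I would note that $B_{P,r}$ is a sum of rank-one projectors $|\omega\rangle\langle\omega|$ by Eq.~(\ref{BPr}) and is therefore diagonal, with eigenvalue $(-1)^{\tau_P(\Pi_P^{-1}\omega\oplus r)+\tau_P(\Pi_P^{-1}\omega)}$ at $|\omega\rangle$, while $Z^{Q_Pr}|\omega\rangle=(-1)^{(Q_Pr)\cdot\omega}|\omega\rangle$. Matching the two eigenvalues and substituting $\nu=\Pi_P^{-1}\omega$ together with $Q_P=L\Pi_P+\Pi_PL$ and $\Pi_P^T\Pi_P=I$ reduces the statement to the scalar identity
\[
\tau_P(\nu\oplus r)+\tau_P(\nu)\equiv (Lr+\Pi_P^TL\Pi_Pr)\cdot\nu\pmod{2}
\]
for every $\nu\in\mathbb{F}_2^M$. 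It suffices to verify this for $r$ a standard unit vector $e_i$, since both sides depend $\mathbb{F}_2$-linearly on $r$ once the underlying Clifford action on each $X_i$ is fixed, and the rest is generated by the Pauli algebra.

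Second, I would re-express $\tau_P(\nu)$ from the determinant formula (\ref{tau_def}) as an explicit inversion count of the sequence $(P(j_1),\ldots,P(j_N))$ obtained from the ordered support $[\nu]=(j_1<\dots<j_N)$. This yields the $\mathbb{F}_2$-quadratic form
\[
\tau_P(\nu)\equiv\sum_{l<k}\nu_l\nu_k\,[P(l)>P(k)]\pmod{2}.
\]
Expanding $(\nu_l\oplus r_l)(\nu_k\oplus r_k)$ via $\mathbb{F}_2$-bilinearity and cancelling the $\nu_l\nu_k$ parts, the linear-in-$\nu$ contribution to $\tau_P(\nu\oplus e_i)+\tau_P(\nu)$ collects the coefficient of each $\nu_k$ as $[i<k,\,P(i)>P(k)]+[i>k,\,P(i)<P(k)]$, which by a short four-way case split on the joint orderings of $(i,k)$ and $(P(i),P(k))$ is congruent modulo $2$ to $[i<k]+[P(i)<P(k)]$.

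Third, I would compute the $k$-th component of $\Pi_P^TQ_Pe_i=(L+\Pi_P^TL\Pi_P)e_i$ directly. Using $(Le_j)_k=[k>j]$ gives $(Le_i)_k=[k>i]$, and using $\Pi_Pe_i=e_{P(i)}$ together with the same triangular-matrix identity gives $(\Pi_P^TL\Pi_Pe_i)_k = (Le_{P(i)})_{P(k)} = [P(k)>P(i)]$. Adding the two and noting $[k>i]+[P(k)>P(i)]\equiv [i<k]+[P(i)<P(k)]\pmod{2}$ matches the combinatorial coefficient derived above, establishing the identity for $r=e_i$ and hence the theorem.

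The main obstacle will be the bookkeeping in the second step: although the algebra is simple, the exponents live in a genuine $\mathbb{F}_2$-quadratic form and the four-way case analysis of inversion counts must be laid out carefully. A secondary subtlety is to confirm that, when $r$ has Hamming weight $\geq 2$, the $r$-quadratic residues generated by the expansion of $\tau_P(\nu\oplus r)$ are correctly absorbed by composing the unit-vector transformations through the Pauli multiplication rules $X_iZ_j = (-1)^{\delta_{ij}}Z_jX_i$ — the linearity of $Q_P$ in $r$ ensures consistency with the elementary cases.
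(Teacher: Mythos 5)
Your argument for $r=e_i$ is correct and takes a genuinely different route from the paper's. The paper works directly with the determinant $\det([\omega\oplus\Pi_P r]|[\Pi_P^{-1}\omega\oplus r])$, performs a cofactor expansion along the added row and column, and reads off the resulting positions $n-i$, $n-j$ as the inner products $\omega\cdot L\Pi_P r$ and $\omega\cdot\Pi_P L r$, with a two-case split according to whether the support of $\Pi_P^{-1}\omega$ already contains the flipped column. You instead convert $\tau_P(\nu)$ once and for all into the $\Ft$-quadratic inversion form $\sum_{l<k}\nu_l\nu_k[P(l)>P(k)]$ and polarize it; this avoids the case split and makes the origin of $L+\Pi_P^T L\Pi_P$ transparent. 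Your coefficient bookkeeping checks out: $[i<k,P(i)>P(k)]+[i>k,P(i)<P(k)]\equiv[k>i]+[P(k)>P(i)]\pmod 2$ in all four orderings, this matches $(Le_i)_k+(\Pi_P^T L\Pi_P e_i)_k$, and $\tau_P(e_i)=0$, so for unit vectors the identity is exact.

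The gap is exactly the "secondary subtlety" you flag at the end, and it is not absorbed the way you hope: $\tau_P(\nu\oplus r)+\tau_P(\nu)$ is \emph{not} $\Ft$-linear in $r$. Polarization gives $\tau_P(\nu\oplus r)+\tau_P(\nu)=B(\nu,r)+\tau_P(r)$; the bilinear part is additive in $r$ as you need, but the residue $\tau_P(r)=\sum_{l<k}r_l r_k[P(l)>P(k)]$ is a $\nu$-independent constant that survives. Evaluating the diagonal operator of Eq.(\ref{BPr}) at $\omega=0$ gives eigenvalue $(-1)^{\tau_P(r)}$ while $Z^{Q_P r}$ gives $+1$, so what your method actually proves is $B_{P,r}=(-1)^{\tau_P(r)}Z^{Q_P r}$. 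This is a genuine discrepancy with the stated theorem, not a bookkeeping artifact: for $M=3$, $P(123)=(231)$ and $r=(1,0,1)$ one finds $\tau_P(r)=1$ and $B_{P,r}=-Z^{Q_P r}$. You should note, however, that the paper's own proof has the identical lacuna in its final sentence: composing the single-$X_i$ conjugations via Eq.(\ref{X_prod}) requires commuting each $Z^{Q_P e_i}$ past the subsequent $X^{\Pi_P e_k}$, and the accumulated sign is precisely $(-1)^{\sum_{i<k}r_i r_k[P(i)>P(k)]}=(-1)^{\tau_P(r)}$. So your polarization identity in fact pinpoints cleanly the phase that is missing from the theorem (and propagates into Eq.(\ref{H_alpha_symm})) for Pauli strings containing more than one $X$; the proposal would become a complete and correct proof of the corrected statement $B_{P,r}=(-1)^{\tau_P(r)}Z^{Q_P r}$ once you keep that residue instead of discarding it.
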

\begin{proof}
First of all, since
\be
\mC_P X^{r} \mC_P^\dagger = \prod_{i=1}^M \mC_P X_i^{r_i} \mC_P^\dagger
\label{X_prod}
\ee
where $r_i \in \{ 0,1 \}$, it is sufficient to prove the results for a vector $r$ with a single non-zero entry as the map of a single $X_i$ determines how the whole string gets mapped.
Let us therefore assume $r$ only has a single non-zero entry.
Let $\eta$ denote the sign factor in Eq.(\ref{BPr}), i.e.
\be
\eta = (-1)^{ \tau_P (\Pi_P^{-1} \omega \oplus r)  + \tau_P (\Pi_P^{-1} \omega)} = \det ([ \omega \oplus \Pi_P r ] | [\Pi_P^{-1}\omega \oplus r  ]) \det ( [ \omega ] |[ \Pi_P^{-1} \omega ] )
\label{eq_sigma}
\ee
There are two cases to consider. Let us consider the first case that the set of columns $[\Pi_P^{-1} \omega]$ does not contain the column for which $r$ is non-zero.
Then the first determinant in the product gets an extra row and column, and let us say that this is an $n\times n$-determinant. Let us label the columns $[\Pi_P^{-1} \omega \oplus r  ]$ 
and the rows $[ \omega \oplus \Pi_P r ]$ from ${1, \ldots, n}$. If the non-zero element in $r$ and its image in $\Pi_P r$ appear in column $j$ and row $i$ in this ordering,
then by expanding the determinant in that row and column we have
\be
\det ( [ \omega \oplus \Pi_P r ] | [\Pi_P^{-1}\omega \oplus r  ])  = (-1)^{i+j} \det ( [ \omega ] |[ \Pi_P^{-1} \omega ] )
\ee
and therefore (mod 2) we have
\be
\eta = (-1)^{i+j} = (-1)^{(n-i) + (n-j)}
\ee
 where $n-i$ is the number of rows in $[ \omega \oplus \Pi_P r ]$ above $i$ and $n-j$ the number of occupied columns of the set $[\Pi_P^{-1} \omega \oplus r  ]$ above $j$. 
An example of this procedure is described in Fig.\ref{proof_fig}.
The number $n-j$
can be measured by taking the vector $r$ and filling all entries above the only occupied one with ones and the rest with zeroes, i.e. $r=(0, \ldots,0,\underline{1},0,\ldots,0) \rightarrow (0,\ldots,\underline{0},1,1,\ldots,1)$ and then taking the Euclidean inner product with $\Pi_P^{-1} \omega$, but this amounts 
to calculating
\be
n-j = \Pi_P^{-1} \omega \cdot L r = \omega \cdot \Pi_P L r
\ee
where we used $\Pi_P^{-1}=\Pi_P^T $ since the permutation matrix is orthogonal and 
where $L$ is a $M \times M$ matrix with ones below the diagonal and zeroes otherwise, i.e
\be
L = \begin{pmatrix}
0 & 0 & 0 &\ldots  & 0 \\
1 & 0 & 0 & \ldots  & 0\\
1 & 1 & 0 &\ldots  & 0\\
\vdots & & \ddots & \ddots & \vdots \\
1 & 1 &  \ldots  & 1& 0
\end{pmatrix}
\ee
 Similarly $n-i$ can be found by filling the vector $\Pi_P r$ above its only non-zero entry
with ones and taking all other elements to be zero, and then taking the Euclidean inner product with vector $\omega$; so we have
\be
n- i = \omega \cdot L \Pi_P r
\ee
and consequently
\be
\eta = (-1)^{\omega \cdot (L \Pi_P + \Pi_P L) r}= (-1)^{\omega \cdot Q_P r}
\label{sigma_Q}
\ee
\begin{figure}
\centering
\includegraphics[width=0.75\textwidth]{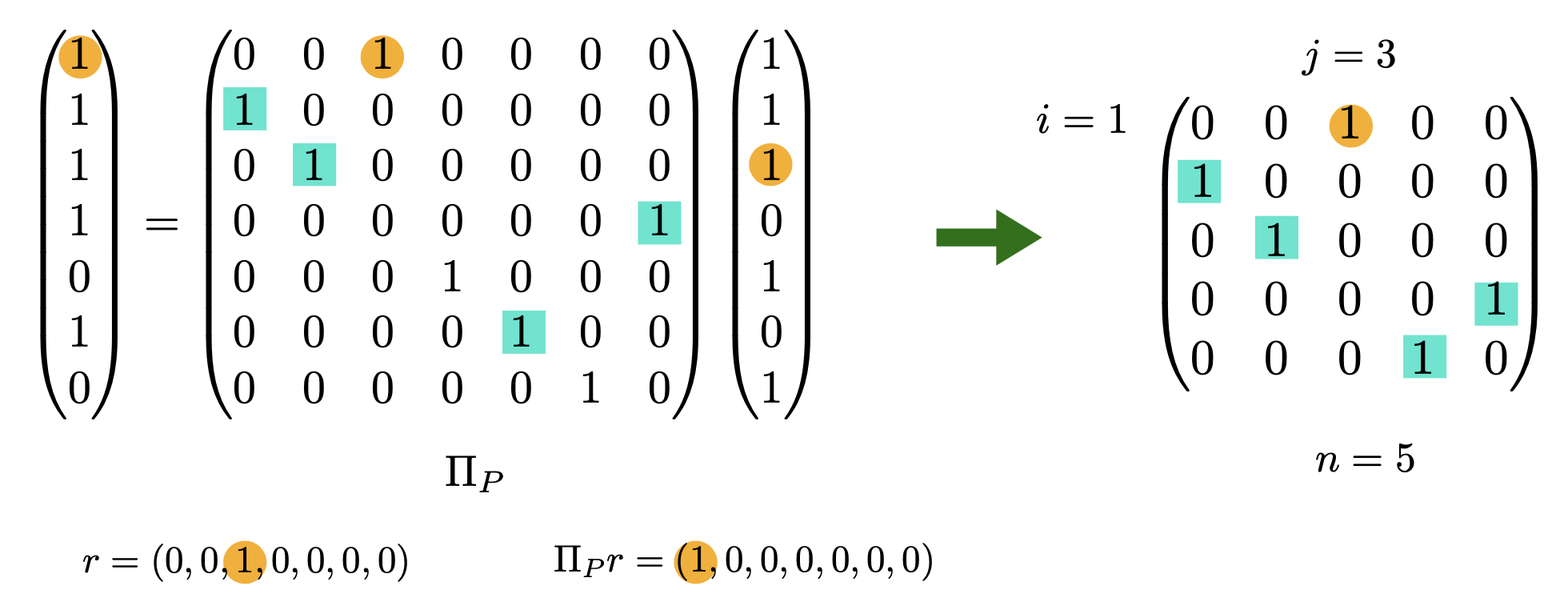} 
\caption{Example for the main step in the proof; for this example we have $(i,j)=(1,3)$, $n=5$ and $2= \Pi_P^{-1} \omega \cdot Lr = 5 -j$ and $4=\omega \cdot L \Pi_P r=5-i$
corresponding to $\eta=(-1)^{4+2}=1$. 
The right hand side of the figure represents the submatrix $( [ \omega \oplus \Pi_P r ] | [\Pi_P^{-1}\omega \oplus r  ])=(12346|12357)$.}
\label{proof_fig}
\end{figure}
where $Q_P = L \Pi_P + \Pi_P L$. The other case is that the set of columns $[\Pi_P^{-1} \omega]$  contains the column for which $r$ is non-zero.
In this case a column (and its corresponding row) is removed in the matrix $( [ \omega \oplus \Pi_P r ] | [\Pi_P^{-1} \omega \oplus r  ])$
as compared to $( [ \omega ] |[ \Pi_P^{-1} \omega  ] )$. Now we can remove the same row and column in $( [ \omega ] |[ \Pi_P^{-1} \omega  ])$
and by the same proof as above we again find the result of Eq.(\ref{sigma_Q}). This finally gives
\be
B_{P,r} = \sum_{\omega \in \Ft^M} 
(-1)^{ \omega \cdot Q_P r} |\omega   \rangle \langle \omega |  = Z^{Q_P r}
\ee
For vectors $r$ with more than one non-zero component we can use Eq.(\ref{X_prod}) repeatedly giving $B_{P,r}=Z^{Q_P r}$ for general $r$,
which establishes what we wanted to prove.
\end{proof}

Using the result of this theorem in Eq.(\ref{sigma_c_trans_perm}) we thus have derived the transformation law for the operator $\sigma^q$ under conjugation
with $\mC_P$. This result constitutes the main theorem of this work:

\begin{thrm}
\label{thrm_symmetry}
Consider a general Hamiltonian in Jordan-Wigner form
\be
\hH = \sum_i \alpha_i X^{r_i }Z^{s_i}
\label{H_alpha}
\ee
where $\alpha_i \in \mathbb{C}$ are coefficients and $r_i, s_i \in \Ft^{M}$. 
Let $P$ be a permutation corresponding to a symmetry of the Hamiltonian and $\Pi_P$ its permutation matrix.
Then the unitary operator
\be
\mC_P |\nu \rangle =(-1)^{\tau_P (\nu)} | \Pi_P \nu \rangle \quad ,\quad (-1)^{\tau_P (\nu)} = \det ([\Pi_P \nu] | [\nu])
\ee
acting on tensor products of qubits $\nu \in \Ft^M$ maps the Hamiltonian to itself according to
\be
\hH = \mC_P \hH \mC_P^\dagger =  \sum_i \alpha_i X^{\Pi_P r_i} Z^{\Pi_P s_i} Z^{Q_P r_i}
\label{H_alpha_symm}
\ee
where $Q_P = L \Pi_P + \Pi_P L$, where $L$ is the $M \times M$ matrix with every entry filled with ones below the diagonal and zero entries otherwise.
The explicit form of the Clifford tableau corresponding to $\mC_P$ is given by
\be
\mathcal{M}_P 
= 
\left(
\begin{array}{c|c}
\Pi_P & 0 \\
\hline
Q_P & \Pi_P
\end{array}
\right)
\label{tableau_thrm}
\ee
\end{thrm}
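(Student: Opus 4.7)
The plan is to assemble Theorem \ref{thrm_symmetry} almost entirely from pieces already in hand, since the main technical work has been done in the derivation preceding Theorem 1. I would treat this as essentially a bookkeeping theorem that combines three ingredients: linearity of conjugation, the computation culminating in Eq.~(\ref{sigma_c_trans_perm}), and Theorem 1 on the structure of $B_{P,r}$.

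First I would invoke linearity to reduce to a single term. Conjugation by the unitary $\mC_P$ commutes with sums, so it suffices to evaluate $\mC_P (X^{r_i} Z^{s_i}) \mC_P^\dagger$ term by term. For each such term I would directly quote Eq.~(\ref{sigma_c_trans_perm}), which yields
\be
\mC_P X^{r_i} Z^{s_i} \mC_P^\dagger = X^{\Pi_P r_i} Z^{\Pi_P s_i} B_{P,r_i},
\ee
and then apply Theorem 1 to rewrite $B_{P,r_i} = Z^{Q_P r_i}$ with $Q_P = L\Pi_P + \Pi_P L$. Summing over $i$ with coefficients $\alpha_i$ produces the second equality in Eq.~(\ref{H_alpha_symm}).

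Second, I would establish the first equality $\hH = \mC_P \hH \mC_P^\dagger$ by transporting the Fock-space invariance. The symmetry assumption on $P$ is exactly the content of Eq.~(\ref{H_perm}), which in Fock space yields $[\hH, U(P)] = 0$ as recorded in Eq.~(\ref{H_inv}). Since the mapping $\mathcal{J}\colon \mathcal{F}\to\Ct^M$ is an isomorphism of basis states, and $\mC_P$ was constructed precisely so that its action on occupation-number kets mirrors $U(P)$ on anti-symmetrised kets (the signs $(-1)^{\tau_P(\nu)}$ in $\mC_P$ reproduce the signs $(-1)^{\pi_P(J)}$ in $U(P)$ via the determinant formula Eq.~(\ref{tau_def})), we can transport the Fock-space identity to the multi-qubit picture. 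This is really the only conceptual point that needs spelling out, and it is essentially already contained in the construction of $\mC_P$.

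Finally, for the Clifford tableau (\ref{tableau_thrm}) I would simply read off exponents. Writing the image as $X^{\Pi_P r} Z^{\Pi_P s + Q_P r}$ and comparing with the defining relation (\ref{mC_trans}), the map $(r,s) \mapsto (r',s') = (\Pi_P r,\, Q_P r + \Pi_P s)$ is linear over $\Ft$, and arranging its matrix in block form gives exactly Eq.~(\ref{tableau_thrm}). No obstacle is expected here: the theorem is a packaging result, and the only subtlety worth flagging is making sure the ordering convention for the Pauli string (all $X$'s before all $Z$'s) is respected when combining $Z^{\Pi_P s}$ and $Z^{Q_P r}$, which is immediate since both are diagonal and commute.
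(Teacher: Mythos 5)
Your proposal is correct and follows essentially the same route as the paper: the paper proves Theorem \ref{thrm_symmetry} by combining the term-by-term conjugation computed in Eq.~(\ref{sigma_c_trans_perm}) with Theorem 1's identification $B_{P,r}=Z^{Q_P r}$, obtaining the invariance $\hH=\mC_P\hH\mC_P^\dagger$ from the Fock-space relation (\ref{H_inv}) via the construction of $\mC_P$ as the occupation-number image of $U(P)$, and reading off the tableau from the linear map $(r,s)\mapsto(\Pi_P r,\,Q_P r+\Pi_P s)$. No gaps.
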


When we apply this theorem to the example of section \ref{sec_example} we recover precisely the Clifford tableau $\mathcal{M}_P$
of Eq.(\ref{tableau_example}). Interestingly Theorem \ref{thrm_symmetry} can also be applied directly in Jordan-Wigner form to find
Hamiltonians invariant under permutations $P$ and therefore, given a general expansion as in Eq.(\ref{H_alpha}), the coefficients $\alpha_i$ can be chosen
such that Eq.(\ref{H_alpha_symm}) is satisfied.\\
We finally remark that the permutation symmetries induce a finite dimensional representation $P \rightarrow \mathcal{M}_P$ inside the group of symplectic matrices.
Let $P_1$ and $P_2$ be permutations representing symmetries and $\mathcal{M}_{P_1}$ and $\mathcal{M}_{P_2}$ be the corresponding 
Clifford tableaux. Then the action of $P_1$ followed by $P_2$ yields for the Clifford tableaux that
\begin{eqnarray}
\mathcal{M}_{P_2} \mathcal{M}_{P_1}  &=&
\left(
\begin{array}{c|c}
\Pi_{P_2} & 0 \\
\hline
Q_{P_2} & \Pi_{P_2}
\end{array}
\right)
\left(
\begin{array}{c|c}
\Pi_{P_1} & 0 \\
\hline
Q_{P_1} & \Pi_{P_1}
\end{array}
\right) \nonumber \\
&=&
\left(
\begin{array}{c|c}
\Pi_{P_2}  \Pi_{P_1} & 0 \\
\hline
Q_{P_2} \Pi_{P_1} + \Pi_{P_2} Q_{P_1}  & \Pi_{P_2} \Pi_{P_1} 
\end{array}
\right)
\end{eqnarray}
The lower left block has the structure
\begin{eqnarray}
Q_{P_2} \Pi_{P_1} + \Pi_{P_2} Q_{P_1} &=& ( L \Pi_{P_2} + \Pi_{P_2} L) \Pi_{P_1} + \Pi_{P_2}  ( L \Pi_{P_1} + \Pi_{P_1} L) \nonumber \\
&=& L \Pi_{P_2} \Pi_{P_1} + \Pi_{P_2} \Pi_{P_1} L
\end{eqnarray}
since the addition of the matrix $\Pi_{P_2}   L \Pi_{P_1}$ to itself yields zero for $\Ft$ entries.
It thus follows that, since $ \Pi_{P_2} \Pi_{P_1}=  \Pi_{P_2 \circ P_1}$, we have
\be
\mathcal{M}_{P_2} \mathcal{M}_{P_1}  = 
\left(
\begin{array}{c|c}
\Pi_{P_2 \circ P_1} & 0 \\
\hline
L \Pi_{P_2 \circ P_1} + \Pi_{P_2 \circ P_1} L  & \Pi_{P_2 \circ P_1} 
\end{array}
\right)
 = \mathcal{M}_{P_2 \circ P_1}
\ee
Therefore the permutation symmetries of the Hamiltonian induce a representation $P \rightarrow \mathcal{M}_P$ inside the group of symplectic matrices.
This is a finite subgroup of the Clifford group \cite{Mastel2023}, which plays an important role in quantum error correcting formalisms,
so that perhaps the subspaces invariant under permutations can be employed in error correction schemes for quantum computing 
calculations where certain symmetries need to be enforced.

\section{Qubit removal by symmetries}

\subsection{Constructing $Z$-strings commuting with the Hamiltonian}

In this and the following section we will address the issue of removal of qubits based on symmetry properties of the Hamiltonian.
We start by considering the $M \times M$  permutation matrix $\Pi_P$ corresponding to a symmetry  that permutes all the labels of the creation an annihilation operators 
in the Hamiltonian as in Eq.(\ref{H_perm}). Because $\Pi_P$ is a unitary matrix it can be diagonalised by a unitary transformation $V$ as
\be
\lambda_\alpha \delta_{\alpha \beta} = \sum_{ij} V_{\alpha i} \Pi_{P,ij} (V^\dagger)_{j\beta} =  \sum_j V_{\alpha P(j)} (V^\dagger)_{j\beta}
\label{Pi_diag}
\ee
where we used that $\Pi_{P,ij}=\delta_{iP(j)}$. Since every permutation has an finite order, i.e. there is a smallest integer $n$ such that $P^n=1$, it follows
that the eigenvalues $\lambda_\alpha$ satisfy $\lambda_\alpha^n=1$ such that they are the $n$-th roots of unity for some $n$.
We then define new creation and annihilation operators
\be
c_\alpha = \sum_j V_{\alpha j} a_j \quad \quad c_\alpha^\dagger = \sum_j a_j^\dagger (V^\dagger)_{j \alpha}  
\label{b_def}
\ee
that again obey the anti-commutation relations.
We can thus define new anti-symmetric $N$-body ket states
\begin{align}
| \alpha \rangle &= | \alpha_1 \ldots \alpha_N \rangle = c_{\alpha_N}^\dagger \ldots  c_{\alpha_1}^\dagger | 0 \rangle \nonumber \\
&= \sum_{J} (V^\dagger)_{j_N \alpha_N } \ldots  (V^\dagger)_{j_1 \alpha_1} a^\dagger_{j_N} \ldots a^\dagger_{j_1} |0 \rangle
= \sum_{J} (V^\dagger)_{j_N \alpha_N } \ldots  (V^\dagger)_{j_1 \alpha_1} | J \rangle
\end{align}
It then follows that
\begin{align}
U (P) | \alpha \rangle &= \sum_{J} (V^\dagger)_{j_N \alpha_N} \ldots  (V^\dagger)_{ j_1 \alpha_1 } U (P)| J \rangle
= \sum_{J}  (V^\dagger)_{j_N \alpha_N } \ldots  (V^\dagger)_{ j_1 \alpha_1 } | P (J) \rangle \nonumber \\
&=  \sum_{J}  (V^\dagger)_{j_N \alpha_N } \ldots  (V^\dagger)_{ j_1 \alpha_1 }   a_{P(j_N)}^\dagger \ldots a_{P(j_1)}^\dagger |0 \rangle \nonumber \\
&= \sum_{J, \beta}  V_{\beta_N, P(j_N)}(V^\dagger)_{j_N \alpha_N } \ldots  V_{\beta_1 P(j_1)}(V^\dagger)_{ j_1 \alpha_1 }   c_{\beta_N}^\dagger \ldots c_{\beta_1}^\dagger  |0 \rangle \nonumber \\
&= \sum_\beta \delta_{\beta_N \alpha_N} \lambda_{\beta_N} \ldots   \delta_{\beta_1 \alpha_1} \lambda_{\beta_1} | \beta \rangle =  \lambda_{\alpha_1} \ldots \lambda_{\alpha_N} | \alpha \rangle
\label{U_eigen}
\end{align}
where in the last step we used Eq.(\ref{Pi_diag}); so $U (P)$ becomes a diagonal matrix in terms of the new many-body kets  $|\alpha \rangle$.
The matrix elements $\langle \alpha | \hH | \beta \rangle$ of the Hamiltonian in this basis are most conveniently calculated by first transforming the $a_i$ operators to $c_\alpha$-operators
in the second quantized Hamiltonian according to Eq.(\ref{b_def}). \\
Having expressed the Hamiltonian in terms of the $c_\alpha$-operators we can obtain multi-qubit representation of the Hamiltonian
by replacing $c_\alpha$ and $c_\alpha^\dagger$ by equivalent expressions in Jordan-Wigner form, while
Eq.(\ref{U_eigen}) in occupation number representation becomes
\be
\mC_P | \nu_1 \ldots \nu_M \rangle = (\lambda_{\alpha_1})^{\nu_1} \ldots (\lambda_{\alpha_M})^{\nu_M} | \nu_1 \ldots \nu_M \rangle 
\ee
where $\lambda_{\alpha_k}$ is a $n$-root of unity depending under which symmetry class one-body state $| \alpha_k \rangle$ transforms. In the special case 
that $P^2=1$
we have $\lambda_{\alpha_k}=\pm 1$ and we have that
\be
\mC_P  = \sum_\nu (\pm)^{\nu_1} \ldots  (\pm)^{\nu_M} | \nu \rangle \langle \nu | = \sum_\nu (-1)^{\gamma \cdot \nu}  | \nu \rangle \langle \nu |
 = Z_1^{\gamma_1} \ldots Z_M^{\gamma_M}
 \label{mC_Zstring}
\ee
where $\gamma_k=1$ if $\lambda_{\alpha_k}=-1$ and zero otherwise. \\
Let us now consider the largest subgroup of the Hamiltonian that has no group elements of order higher than two; such a subgroup is Abelian and
is called a Boolean group. Since all the group elements commute, we can find a unitary matrix $V$ that diagonalises all permutation matrices $\Pi_P$ at the same time
and use this matrix in Eq.(\ref{b_def}).
Since $[\hH, \mC_P]=0$ for all such $P$ it follows from Eq.(\ref{mC_Zstring}) that the Hamiltonian commutes with the corresponding
strings of $Z_i$ operators for all $P$ in the subgroup, i.e.
\be
[ \hH , Z_1^{\gamma_1} \ldots Z_M^{\gamma_M} ] = 0
\label{Zstring_H}
\ee
 Such strings are order two, i.e. their square is the unit operator and we will show that by a suitable choice of qubit mapping
there is a transformation that maps them to single $Z_i$-matrices which in the new representation also commute with the Hamiltonian. This means that
in this representation we can discard the corresponding qubits and replace the corresponding $Z_i$-operators in the Hamiltonian by their eigenvalues, i.e. either $1$ or $-1$.
We thereby have succeeded in reducing the qubit requirement of the problem; how the corresponding transformation is constructed in practice is discussed in the next section.\\
We finally remark that apart from the permutation symmetries there are (if there are no spin-coupling terms in the Hamiltonian)
also unitary matrices that square to the identity and which are related to number conservation of each spin species independent of any other spatial symmetry.
The corresponding operators $\mC_\sigma$ are given in multi-qubit space as 
\be
\mC_\sigma  = \sum_\nu (-1)^{n_\sigma \cdot \nu}  | \nu \rangle \langle \nu |
 = Z_1^{n_{\sigma,1}} \ldots Z_M^{n_{\sigma,M}}
 \label{number_op}
\ee
where $n_{\sigma}$ is an $M$-dimensional vector with coefficients $n_{\sigma,j}=1$ for every one-particle state $j$ representing a spin $\sigma$ electron
and with zero coefficients otherwise. 

\subsection{Symmetry-adapted encoding by affine qubit transformations: a short proof}

In order to perform the qubit reduction mentioned in the previous section we need to find a transformation that transforms strings of Pauli $Z$-operators into
single Pauli $Z$-matrices. We now show that an affine transformation on qubits, i.e. an invertible linear transformation $A$ followed by a translation over a vector $b$
induces a unitary transformation that we can choose in such a way to achieve our goal.   
We follow a notation inspired by reference \cite{Hostens2005} which we extend to affine transformations, and define the unitary transformation $\mC$ by
\be
\mC = \sum_{x \in \Ft^M} | Ax \oplus b \rangle \langle x| \quad \quad \quad \mC^\dagger = \sum_{y \in \Ft^M}|  y \rangle \langle Ay \oplus b | 
\ee
and study how a general product $\sigma^q=X^r Z^s$ as in Eq.(\ref{sigma_q}) transforms; we have 
\begin{align}
&\mC \sigma^q \mC^\dagger  = \sum_{x,y,\nu \in \Ft^M}(-1)^{s \cdot \nu}  | Ax \oplus b \rangle \langle x|
\nu \oplus r \rangle \langle \nu |  y \rangle \langle Ay \oplus b |  \nonumber \\
& = \sum_{x,y \in \Ft^M}(-1)^{s \cdot y}  | Ax \oplus b \rangle \langle x| y \oplus r \rangle  \langle Ay \oplus b |  \nonumber \\
& = \sum_{y \in \Ft^M}(-1)^{s \cdot y}  | Ay \oplus Ar \oplus b \rangle   \langle Ay \oplus b |  
 = \sum_{\omega \in \Ft^M}(-1)^{s \cdot A^{-1} (\omega+b)} |\omega \oplus Ar  \rangle \langle \omega | \nonumber \\
& = \sum_{\omega \in \Ft^M}(-1)^{(A^{-1})^Ts \cdot (\omega\oplus b)} |\omega \oplus Ar  \rangle \langle \omega |
= (-1)^{(A^{-1})^Ts \cdot b} X^{Ar} Z^{(A^{-1})^Ts}  \nonumber \\
&= (-1)^{\mu_b (q)} \sigma^{\mathcal{M}_A q}
\label{sigma_c_trans}
\end{align}
where we defined
\be
\mathcal{M}_A = 
\begin{pmatrix}
A & 0 \\
0 & (A^{-1})^T
\end{pmatrix}
\quad \quad \quad \mu_b (q) = (A^{-1})^Ts \cdot b
\label{trans_result}
\ee
This result represents a slightly corrected and considerably shorter proof of a theorem proved in \cite{Picozzi2023} (which erroneously gives $\mu_b (q)=s \cdot b$ instead).\\ 
As a corollary we note that if we choose $s=A^T t$ for some $t \in \Ft^M$, we see from Eq.(\ref{sigma_c_trans}) that conjugation by $\mC$ maps 
$Z^{A^T t}$ to $Z^t$ up to a sign; therefore by choosing vectors $t_i \in \Ft^M$ to have entry $1$ on position $i$ and zero otherwise we can map to a single $Z_i$.
The vectors $u_i=A^T t_i$ are just the columns of $A^T$ or, equivalently, the rows of $A$. This means that for any vector $u_i=(A_{i1}, \ldots, A_{iM})$ corresponding to a row vector $i$ of $A$
the string $Z^{u_i}$ is mapped to a single Pauli matrix $Z_i$ with a phase, according Eq.(\ref{trans_result}), corresponding to $(-1)^{t_i \cdot b}$, and thus
\be
\mC Z^{u_i}  \mC^\dagger = (-1)^{t_i \cdot b} Z_i    
\label{Zi_map}
\ee
We can summarize our results in the following theorem
\begin{thrm}
Let $\sigma^q =X^r Z^s$ with $q=(r,s)$ and $r,s \in \Ft^M$. Further let $b \in \Ft^M$ and $A$ be in invertible $M \times M$-matrix with entries in $\Ft$. Then the unitary transformation
\be
\mC = \sum_{\nu \in \Ft^M} | A\nu \oplus b \rangle \langle \nu|
\ee   
on $\Ct^M$ has the property $\mC c^q \mC^\dagger = (-1)^{\mu_b (q)} \sigma^{\mathcal{M}_A q}$ where
\be
\mathcal{M}_A = 
\begin{pmatrix}
A & 0 \\
0 & (A^{-1})^T
\end{pmatrix}
\quad \quad \quad \mu_b (q) = (A^{-1})^Ts \cdot b
\label{MA_mat}
\ee
In particular if $Z^{u_i}$ is a string of Pauli $Z$-operators formed by row $i$ of matrix $A$, i.e. $u_i=(A_{i1}, \ldots, A_{iM})$ 
and $t_i$ a vector with coefficient $1$ on position $i$ and zero coefficients otherwise, then
\be
\mC Z^{u_i}  \mC^\dagger = (-1)^{t_i \cdot b} Z_i   
\ee
such that the
string $Z^{u_i}$  is mapped to a single Pauli $Z_i$.
\end{thrm}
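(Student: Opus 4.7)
The plan is to compute $\mC \sigma^q \mC^\dagger$ directly by substituting the diagonal representation $\sigma^q = \sum_\nu (-1)^{s\cdot \nu} |\nu \oplus r\rangle \langle \nu|$ from Eq.(\ref{sigma_q}) between $\mC$ and $\mC^\dagger$. The resulting triple sum over $x,y,\nu \in \Ft^M$ collapses via the two Kronecker deltas coming from $\langle x | \nu \oplus r \rangle$ and $\langle \nu | y \rangle$, which set $\nu = y$ and $x = y \oplus r$, leaving a single sum over $y$ in which the ket label is $Ay \oplus Ar \oplus b$ and the bra label is $Ay \oplus b$.

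Next I would make the change of variables $\omega = Ay \oplus b$, equivalently $y = A^{-1}(\omega \oplus b)$; this is a bijection on $\Ft^M$ because $A$ is invertible over $\Ft$, so the summation range is preserved. The ket and bra labels become $\omega \oplus Ar$ and $\omega$, matching the canonical form of an $X$-string times a diagonal $Z$-operator. The phase $(-1)^{s \cdot y}$ becomes $(-1)^{s \cdot A^{-1}(\omega\oplus b)}$, which I would rewrite using the adjoint identity $s \cdot A^{-1} v = (A^{-1})^T s \cdot v$ as the product $(-1)^{(A^{-1})^T s \cdot \omega}\,(-1)^{(A^{-1})^T s \cdot b}$. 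The $\omega$-dependent factor then reconstructs $X^{Ar} Z^{(A^{-1})^T s} = \sigma^{\mathcal{M}_A q}$ using the identity $\sum_\omega (-1)^{t\cdot\omega}|\omega\oplus Ar\rangle\langle\omega| = X^{Ar} Z^t$, while the $b$-dependent factor pulls out as the global sign $(-1)^{\mu_b(q)}$.

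The main obstacle I anticipate is the careful bookkeeping of the sign involving $b$: the transpose $(A^{-1})^T$ must act on $s$ before taking the inner product with $b$, rather than simply giving $s\cdot b$, which is the precise place where \cite{Picozzi2023} slipped. Keeping the adjoint identity explicit from the very first line of the change of variables, and treating $s\cdot y$ as a bilinear pairing rather than as a formal product of symbols, should prevent this error.

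For the corollary I would specialise to $r=0$ and $s = A^T t_i$, where $t_i$ is the standard basis vector with a single $1$ in position $i$. Then $(A^{-1})^T s = (A^{-1})^T A^T t_i = t_i$, so $Z^{(A^{-1})^T s}$ collapses to the single operator $Z_i$, and the phase reduces to $(-1)^{t_i \cdot b}$. Recognising that $u_i = A^T t_i$ has components $(A_{i1},\ldots,A_{iM})$, i.e.\ is the $i$-th row of $A$, then yields the stated identity $\mC Z^{u_i} \mC^\dagger = (-1)^{t_i \cdot b} Z_i$.
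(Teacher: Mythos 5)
Your proposal is correct and follows essentially the same route as the paper: insert the diagonal representation of $\sigma^q$, collapse the Kronecker deltas, substitute $\omega = Ay \oplus b$, and use the transpose identity to split off the $(-1)^{(A^{-1})^T s\cdot b}$ phase, with the corollary obtained by specialising to $s = A^T t_i$. You even flag the exact point where the sign bookkeeping must be done carefully, which is where the paper notes reference \cite{Picozzi2023} slipped.
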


Now that we know how to map tensor strings of $Z$-matrices to a single $Z_i$ it remains to find a procedure to find such $Z$-strings that commute with the Hamiltonian
as in Eq.(\ref{Zstring_H}). In this we follow a suggestion given in reference \cite{Bravyi2017} (see also \cite{Gunderman2024} for proof of optimality) that we briefly repeat here for convenience.
In Jordan-Wigner form the Hamiltonian attains the form (as we can always express $Y$ in terms of $XZ$)
\be
\hH = \sum_i \alpha_i (XZ)^{q_i}
\ee
where $\alpha_i \in \mathbb{C}$ are coefficients and $q_i =(r_i, s_i) \in \Ft^{2M}$. Then the commutator of $(XZ)^{q^\prime}$ with $q^\prime =(r^\prime, s^\prime)$
with $\hH$ is obtained from the anti-commutation relations $XZ=-ZX$ as follows
\be
[ \hH, (XZ)^{q^\prime} ] = \sum_i \alpha_i [ (XZ)^{q_i}, (XZ)^{q^\prime} ] = \sum_i \alpha_i ((-1)^{ s_i \cdot r^\prime} -(-1)^{r_i \cdot s^\prime}    ) (XZ)^{q_i + q_i^\prime}
\ee
The commutator therefore vanishes if we can find $r^\prime$ and $s^\prime$ such that 
\be
0 = r_i \cdot s^\prime + s_i \cdot r^\prime
\label{symp}
\ee
for all terms $q_i$. This means that the vector $(s^\prime, r^\prime)$ is in the kernel or nullspace of a matrix $K$ with has all the vectors $(r_i, s_i)$ as rows.
The construction of this kernel is precisely the procedure outlined in \cite{Bravyi2017}, who then proceed to design a transformation that maps Pauli strings deduced from the kernel vectors
to single Pauli
$X$-matrices. We will now deviate from that reference as their additional computational 
problem was very much simplified by the alternative method of Picozzi and Tennyson \cite{Picozzi2023}
for which we gave the much shortened derivation above. Instead of mapping to single $X$-matrices the map is to single $Z$-matrices while a numerical construction as in \cite{Bravyi2017} is avoided.
To elucidate the general theory we give an example of the procedure in the next section, in which we
furthermore also illustrate the theory of the earlier sections.

\subsection{Example: removing three qubits for the Hubbard dimer}

We can now illustrate all the theory from the preceding two sections with a simple interacting quantum system, namely the Hubbard dimer.
We label the creation and annihilation operators with the labels $(1,2,3,4)=(\uparrow  1, \uparrow 2 ,\downarrow 1,\downarrow 2)$
where  each entry $\sigma \, i$ inside the last brackets stands for spin $\sigma \in \{ \uparrow , \downarrow \}$ on Hubbard site $i \in \{ 1,2 \}$.
Then the Hubbard Hamiltonian takes the form
\be
\hat{H} = - t (\ad_1 a_2+  \ad_2 a_1 + \ad_3 a_4 + \ad_4 a_3) + U (\ad_1 a_1 \ad_3 a_3 +  \ad_2 a_2 \ad_4 a_4 )
\label{Hubbard_dimer}
\ee
and which transformed Jordan-Wigner notation has the form
\begin{align}
\hH =& - \frac{t}{2} ( X_1 X_2 (1- Z_1 Z_2)  + X_3 X_4 ( 1 - Z_3 Z_4))  \nonumber \\
&+ \frac{U}{4} ( 2 - Z_1-Z_2- Z_3 - Z_4 + Z_1 Z_3 + Z_2 Z_4 )
\label{Hubbard1}
\end{align}
The conservation of number of particles with a given spin quantum number makes the Hamiltonian commute with the operators
\be
\mC_\uparrow = Z_1 Z_2 \quad \quad \quad \mC_\downarrow = Z_3 Z_4
\ee
corresponding to $n_\uparrow =(1,1,0,0)$ and $n_\downarrow =(0,0,1,1)$ in Eq.(\ref{number_op}).
However, it is readily seen that there are additional order two operators related to permutation of the Hubbard sites
and in particular we consider the Boolean subgroup given by the unit permutation and $P(1234) = (2143)$. The latter permutation is represented by a
permutation matrix $\Pi_P$ and a $Q_P$ matrix (calculated from Eq.(\ref{QP_th})) with the explicit form
\be
\Pi_P 
= 
\begin{pmatrix}
0 & 1 & 0 & 0 \\
1 & 0 & 0 & 0 \\
0 & 0 & 0 & 1 \\
0 & 0 & 1 & 0 
\end{pmatrix}
\quad \quad  \quad 
Q_P 
= 
\begin{pmatrix}
1 & 0 & 0 & 0 \\
0 & 1 & 0 & 0 \\
0 & 0 & 1& 0 \\
0 & 0 & 0& 1 
\end{pmatrix}
\ee
From the corresponding Clifford tableau $\mathcal{M}_P$ of Eq.({\ref{tableau_thrm}) it then follows that under the symmetry operation the $X_i$ transform as
\begin{align}
& \mC_P X_1 \mC_P^\dagger = X_2 Z_1\quad \quad \quad  \mC_P X_2 \mC_P^\dagger = X_1 Z_2  \nonumber \\
& \mC_P X_3 \mC_P^\dagger = X_4 Z_3 \quad \quad \quad \mC_P X_4 \mC_P^\dagger = X_3 Z_4 \nonumber 
\end{align}
while $\mC_P Z_i \mC_P^\dagger =Z_{P(i)}$, which indeed is readily seen to preserve the symmetry of the Hamiltonian.
If we diagonalize $\Pi_P$ we find using Eq.(\ref{b_def}) that the new symmetry adapted creation and annihilation operators are given by
\be
c_{1,2} =  \frac{1}{\sqrt{2}} (a_1 \pm a_2) \quad \quad
c_{3,4} =  \frac{1}{\sqrt{2}} (a_3 \pm a_4) \nonumber 
\ee
where $c_1, c_3$ are the + combinations and $c_2,c_4$ are the - combinations of the $a_i$ on the right hand side of the equations. The corresponding equations for the $c_i^\dagger$ are obtained by taking the adjoint of these expressions.
By exposing the symmetry character of the one-particle states we have the following correspondence between symmetry labels and the operators $c_i$:
\be
(c_1,c_2,c_3,c_4) \leftrightarrow ( g_{\uparrow}, u_{\uparrow}, g_{\downarrow}, u_{\downarrow})
\label{symmetries}
\ee
where $g$ and $u$ are gerade (even) and ungerade (odd) symmetries, i.e. under the symmetry $P$ the $g \rightarrow g$ and
$u \rightarrow -u$. So, for example,
the application of $c_1$ removes a spin up particle from a gerade orbital, etc.
The second quantized Hamiltonian in terms of the new creation and annihilation operators attains the form
\begin{align}
\hH =& - t (c_{1}^\dagger c_1 - c_2^\dagger c_2 + c_{3}^\dagger c_3 - c_4^\dagger c_4) \nonumber \\
&+ \frac{U}{2} (c_{1}^\dagger c_1 + c_2^\dagger c_2) (c_{3}^\dagger c_3 + c_4^\dagger c_4) +  \frac{U}{2}  (c_{1}^\dagger c_2 + c_2^\dagger c_1) (c_{3}^\dagger c_4 + c_4^\dagger c_3)
\end{align}
and further translated to Jordan-Wigner form this becomes
\begin{align}
\hH =& - \frac{t}{2} (-Z_1 + Z_2 -Z_3 + Z_4) + \frac{U}{8} (2- Z_1 -Z_2)(2-Z_3-Z_4) \nonumber \\
&+ \frac{U}{8} X_1 X_2 (1 - Z_1 Z_2)   X_3 X_4 ( 1- Z_3 Z_4)
\end{align}
To calculate the operators that commute with this Hamiltonian we can construct the matrix $K$ containing the row vectors $(r_i,s_i)$ for each term
$X^{r_i} Z^{s_i}$ in the Hamiltonian and find its kernel (see the discussion following Eq.(\ref{symp})). In our case we find
\be
\ker K = \textrm{span} ( (1,1,0,0,0,0,0,0), (0,1,0,1,0,0,0,0),(0,0,1,1,0,0,0,0) )
\ee
corresponding to the $Z$-strings $Z_1 Z_2$, $Z_2 Z_4$ and $Z_3 Z_4$ that commute with the Hamiltonian.
The first three rows of $A$ for our symmetry encoding must therefore contain the $Z$-part of these kernel vectors, i.e. we have the rows $(1,1,0,0),(0,1,0,1),(0,0,1,1)$ 
for the first three rows of $A$. We take the last row to be linearly independent of these, for which a simple choice is $(0,0,0,1)$. 
In reference \cite{Picozzi2023} the vector $b$ of the affine encoding is chosen in such a way that the eigenvalues of the $Z_i$ in the new representation can be taken to be equal to 1 for the symmetry
sector they are interested in.   
We simply take the translation vector $b=0$ for the affine mapping as we do not want to select particular eigenvalues yet for the group generators transformed to $Z_i$.
With this
we find for the Clifford tableau of Eq.(\ref{MA_mat}) that
\be
\mathcal{M}_A = 
\begin{pmatrix}
A & 0 \\
0 & (A^{-1})^T
\end{pmatrix}
=
\left(
\begin{array} {cccc | cccc }
1 & 1 & 0 & 0 &  &  &  &   \\
0 & 1 & 0 &1 &  &  &  &   \\
0 & 0& 1 & 1 &  &  &  &   \\
0 & 0 & 0 & 1 &  &  &  &  \\
\hline\
 &  &  &  & 1 & 0 & 0 & 0 \\
 &  &  &  & 1 & 1 & 0 & 0  \\
 &  &  &  & 0 & 0 & 1 & 0  \\
 &  &  &  & 1 & 1 & 1 & 1  
\end{array}  \right)
\label{Mmatrix_2}
\ee
From this we can read off the transformations of the $X$ and $Z$ operators, and we indeed find that $Z_1 Z_2, Z_2 Z_4, Z_3 Z_4 \rightarrow Z_1,Z_2,Z_3$
so that the new Hamiltonian $\hH^\prime =\mC \hH \mC^\dagger$ becomes
\begin{align}
\hH^\prime =& - \frac{t}{2} ( -Z_1 Z_2  Z_4 +  Z_2  Z_4 -  Z_3 Z_4 + Z_4 ) + \frac{U}{8} (2- Z_1 Z_2  Z_4 -Z_2 Z_4)(2-Z_3 Z_4-Z_4) \nonumber \\
&+  \frac{U}{8} X_4( 1 -  Z_1 -  Z_3 +  Z_1 Z_3)
\end{align}
which now contains a single $X_4$-operator and otherwise only $Z_i$. We
readily see that the Hamiltonian indeed commutes with $Z_1, Z_2$ and $Z_3$ so that all these operators can be replaced by their eigenvalues $\pm 1$ such that the Hamiltonian
reduces to a simple single qubit Hamiltonian. 
The $2$-particle $S_z=0$ multi-qubit states of gerade symmetry $|x\rangle =|1010 \rangle$ and $|x\rangle =|0101 \rangle$ (see also Eq.(\ref{symmetries})) are mapped by $A$ to 
$| Ax \rangle = |1010\rangle $ and $| Ax \rangle =|1011\rangle$ and therefore from the action of
the $Z_1,Z_2$ and $Z_3$ matrices on the first three entries of these vectors we see that we have $Z_1=Z_3=-1$ and $Z_2=1$ as eigenvalues. The  $2$-particle $S_z=0$ multi-qubit states of ungerade symmetry $|x\rangle=|0110\rangle$ and $|x\rangle = |1001\rangle$ are mapped
to $| Ax \rangle =|1110\rangle$ and $|Ax\rangle=|1111\rangle$ corresponding to $Z_1=Z_3=Z_2=-1$ as eigenvalues. So in the $S_z=0$ sector, where $Z_1=Z_3=-1$ we have the form
\begin{align}
\hH^\prime = - t (1+Z_2) Z_4 + \frac{U}{2} (1 + X_4) 
\end{align}
For the gerade states we have $Z_2=1$ and then:
\be
\hH^\prime
= 
\begin{pmatrix}
- 2t + \frac{U}{2} & \frac{U}{2} \\
\frac{U}{2} & 2t +\frac{U}{2} 
\end{pmatrix}
\ee
We have for the eigenvalues $\lambda = \frac{1}{2} ( U \pm \sqrt{16t^2 + U^2})$ containing the well-known singlet gerade ground state energy of the Hubbard dimer
which is a combination of a Slater determinant with two gerade (bonding) orbitals and a Slater determinant with two ungerade (anti-bonding) orbitals. 
On the other hand for the ungerade states we have $Z_2=-1$ and 
\be
\hH^\prime = 
\frac{U}{2}
\begin{pmatrix}
 1 & 1 \\
 1 & 1 
\end{pmatrix}
\ee 
which has eigenvectors $(1,1)$ with eigenvalue $U$ and $(1,-1)$ with eigenvalue $0$. A similar procedure as discussed above applies to other choices of the
eigenvalues of $Z_1,Z_2$ and $Z_3$; there are six other choices that each lead to a single qubit problem and therefore our original $16 \times 16$-problem has blocked into
eight $2 \times 2$ single qubit problems by using symmetry. In this example we found a large reduction from four qubits to a single one.
In general applications one studies larger systems in which the relative reduction is less spectacular, but the universal procedure for qubit reduction as outlined here nevertheless
applies the same way.

\section{Conclusions and outlook}

In this work we considered the symmetry properties of many-body Hamiltonians and the way they are encoded in a Jordan-Wigner transformed manner for implementations on quantum computers.
Whereas these symmetries are explicit in the original second quantized Hamiltonian defined on Fock space, they are concealed in its Pauli tensor form when the Hamiltonian is translated
to multi-qubit space. We proved a general theorem that allows us to straightforwardly derive the symmetry transformation properties of the Pauli operators and thereby making explicit again the symmetries 
of the Hamiltonian. We further gave a short and simplified proof of a recently introduced qubit reduction scheme based on affine mappings of qubits, and illustrated all our theory with an example.\\
The results presented here may be extended via several avenues of research. Instead of regarding the symmetry properties of Jordan-Wigner transformed encodings we could study various other
forms of encoding such as the Bravyi-Kitaev transformation \cite{Seeley2012} or perhaps even more general ones \cite{Steudtner2018}, and ask for a generalization of our main theorem to that case; at first glance there does not seem 
to be any major obstruction in our proofs that would prevent such a generalization of the main theorem. Such a transformation may, for example, be optimized to reduce circuit depth for noise reduction on noisy quantum devices.\\
Regarding the affine encoding scheme
we could wonder if our results could be generalized to non-Boolean but still Abelian groups with group elements of higher order than two. Some of our equations do remain valid and an extension
to general Abelian groups would widen the range of applications for qubit reduction. At least if we are willing to replace qubits by qudits \cite{Hostens2005} in $\mathbb{F}_q$, where $q$  is the largest order of a group element in the Abelian group that we consider, a symmetry reduction of qudits seems an achievable goal. \\
Finally there may be several connections to error correction schemes in quantum computing. The symmetry operators
$\mC_P$ that we discussed were elements of the Clifford group used to develop stabilizer codes \cite{Aaronson2004,VanDenBerg2021,Mastel2023} and in our case we find subspaces of $\Ct^M$ stabilized by symmetry operations, so that perhaps a useful connection
can be made to ways to correct symmetry errors in quantum computing schemes.   
\\
\ack
I wish to acknowledge support from the Finnish Academy under project number 356906. I further like to thank Ivano Tavernelli 
for insightful discussions and hosting me at IBM Research Europe in R\"uschlikon, Switzerland where part of this work was presented.\\

\bibliography{Quantumcomputing}
\bibliographystyle{unsrt}

\end{document}